\newtheorem{theorem}{Theorem}[section]
\newtheorem{assumption}{Assumption}[section]
\DeclareMathOperator*{\argmax}{arg\,max}
\title{Q-greedyUCB: a New Exploration Policy for Adaptive and Resource-efficient Scheduling}
\author{
  Yu~Zhao \\ 
  Department of Electrical and Electronic Engineering\\
  Hanyang University\\
  Ansan 15588, South Korea  \\
  \texttt{zhaoyu0112@hanyang.ac.kr} \\
   \And
 Joohyun~Lee \\
  Department of Electrical and Electronic Engineering\\
  Hanyang University\\
  Ansan 15588, South Korea \\
  \texttt{joohyunlee@hanyang.ac.kr} \\
   \And
  Wei~Chen \\
  Department of Electronic Engineering and Beijing National Research Center\\ for Information Science and Technology\\
  Tsinghua University\\
  Beijing 100084, China \\
  \texttt{wchen@tsinghua.edu.cn} \\
}
\begin{document}

\maketitle
\begin{abstract}
This paper proposes a learning algorithm to find a scheduling policy that achieves an optimal delay-power trade-off in communication systems. Reinforcement learning (RL) is used to minimize the expected latency for a given energy constraint where the environments such as traffic arrival rates or channel conditions can change over time. 
For this purpose, this problem is formulated as an infinite-horizon Markov Decision Process (MDP) with constraints.
To handle the constrained optimization problem, we adopt the Lagrangian relaxation technique to solve it. 
Then, we propose a variant of Q-learning, Q-greedyUCB that combines Q-learning for \emph{average} reward algorithm and Upper Confidence Bound (UCB) policy to solve this decision-making problem. We prove that the Q-greedyUCB algorithm is convergent through mathematical analysis. 
Simulation results show that Q-greedyUCB finds an optimal scheduling strategy, and is more efficient than Q-learning with the $\varepsilon$-greedy and Average-payoff RL algorithm in terms of the cumulative reward (i.e., the weighted sum of delay and energy) and the convergence speed. We also show that our algorithm can reduce the regret by up to 12\% compared to the Q-learning with the $\varepsilon$-greedy and Average-payoff RL algorithm.
\end{abstract}

\keywords{Reinforcement Learning \and Q-learning \and Upper Confidence Bound \and  Infinite-horizon Markov Decision Process \and Delay-power Trade-off.}

\section{Introduction}
People rely on modern communication technologies in everyday lives, via real-time text messaging, audio and video calls, and video streaming services over the Internet. Therefore, for handheld communication devices, low latency, and long battery lifetime are essential requirements of mobile users. Recently, with the emergence of the fifth-generation (5G) of cellular networks, there will be a large number of communication base stations and billions of mobile terminals connected to each other, and therefore low energy consumption and low latency of communication systems become more urgent~\cite{alsharif2019energy,ni2019research}.   
For typical communication systems, under fixed channel conditions, power consumption is a convex function of the transmission rate~\cite{chen2017delay,chen2017delayicc,chen2007optimal}. This means that as the transmission rate increases, the delay decreases at the cost of increased power consumption per bit. Therefore, there exists a trade-off between delay and power consumption. In this work, we aim to characterize the trade-off between delay and power in communication systems and propose a reinforcement learning (RL)-based method to solve this problem. More specifically, we use the RL algorithm to obtain an optimal scheduling strategy for a given average power constraint.

Scheduling under delay or energy constraints has been studied in~\cite{chen2017delay,chen2017delayicc,chen2007optimal,wang2017delay,zhao2019DelayOptimalAE,liu2017delay,sharma2018accelerated,djonin2007q,bae2019beyond}. 
Among these works, the Linear Programming problem (LP) method is adopted to obtain the optimal trade-off between delay and power consumption in~\cite{chen2017delay,chen2017delayicc,wang2017delay,zhao2019DelayOptimalAE,liu2017delay}. Also, the Lagrange relaxation technique has been applied in~\cite{chen2017delay,chen2017delayicc,zhao2019DelayOptimalAE}, in order to transform a constrained optimization problem into an unconstrained problem.
To achieve the minimum delay for a given power constraint, Chen \textit{et al}. have modeled this problem as a constrained Markov decision process (CMDP) and then proposed an algorithm to efficiently obtain the optimal delay-power trade-off in~\cite{chen2017delay, chen2017delayicc}.
To achieve the minimum delay for a given power constraint in a communication system, Wang \textit{et al}. applied the Linear Programming (LP) to solve the optimization problem, and then a stochastic scheduling policy is used to address the delay-power trade-off in~\cite{wang2017delay}. 
Similarly, in order to minimize the delay under a power constraint in a communication system with Markov arrivals, Zhao \textit{et al}. proposed a threshold-based algorithm to obtain an optimal delay-power tradeoff in~\cite{zhao2019DelayOptimalAE}. 
In~\cite{liu2017delay}, to minimize the average latency, the authors proposed a stochastic scheduling policy under power constraints based on channel and buffer states.
In~\cite{sharma2018accelerated}, a novel accelerated RL algorithm is presented to solve scheduling problems in an online manner with a faster learning rate.

Existing studies on the optimization problem for delay and power mostly take conventional methods without a learning algorithm. A challenging problem that arises in this method is that the algorithm needs to be executed repeatedly as the environment changes over time. To this end, we propose a novel RL algorithm to obtain the optimal delay-power trade-off in this paper. Besides, we compare the performance of the proposed algorithm with other algorithms
that focus on long-term rewards to prove the efficient performance of our proposed algorithm.

In our previous work, we applied Q-learning for \emph{average} reward algorithm to solve the problem of queue scheduling in the communication system. The Q-learning algorithm~\cite{watkins1989learning} is a widely used model-free RL algorithm. Several researchers have proposed variants of the Q-learning algorithm to improve its performance. Such as speedy Q-learning~\cite{azar2011speedy}, Delayed Q-learnig~\cite{strehl2006pac}, HAQL ~\cite{Bianchi2004HeuristicallyAQ}. In particular, Jin \textit{et al}. proposed a Q-learning with the UCB (Upper Confidence Bound) exploration policy, and proved it achieves the optimal regret for finite-horizon MDP with discounted rewards~\cite{jin2018q}. Dong \textit{et al}. proposed a Q-learning with UCB algorithm for Infinite-horizon MDP with discounted rewards and proved it is sample efficient than the state-of-the-art RL algorithm~\cite{dong2019q}. However, most of the algorithms above focus on discount rewards. In other words, the agent's goal is to maximize the cumulative discounted reward, which is limited. Discounted RL methods cannot handle some infinite-horizon tasks because there are no terminal states and without discounting future rewards. As Mahadevan showed in~\cite{mahadevan1996average}, this method can lead to sub-optimal action. An alternative method is to maximize the \emph{average} reward. Unfortunately, the study of the average reward RL has received relatively little attention in the literature. The first average reward RL algorithm was proposed by Schwartz~\cite{schwartz1993reinforcement}.

To address the poor performance issues during the learning process (i.e., slow convergence speed and low average reward), this paper presents an RL algorithm named Q-greedyUCB for infinite-horizon MDP with \emph{average} reward. We formulate this problem as an MDP, and then the Lagrange relaxation technique is used to convert a constrained optimization problem into an unconstrained problem. Also, We mathematically prove the convergence of the proposed algorithm. 
The main contributions of this paper are as follows:
\begin{itemize}
	\item We propose a low delay scheduling algorithm Q-greedyUCB for a given power constraint in a single-queue single-server communication system. The proposed method is shown to achieve an optimal trade-off between delay and power. Also, it is mathematically proved that the proposed algorithm converges to an optimal scheduling policy. 
	\item For the Q-greedyUCB, we combine the Q-learning for the \emph{average} reward algorithm and UCB policy for more efficient learning. We show that the new  Q-greedyUCB algorithm can reduce the regret by up to 12\% compared to the traditional Q-learning with $\varepsilon$-greedy policy and Average-payoff RL algorithm in~\cite{singh1994reinforcement}.
\end{itemize}

The remainder of the paper is organized as follows. We describe the system model in Section~\ref{sec:model}, where the delay-power trade-off problem is formulated as an infinite-horizon MDP with constraints problem. In Section~\ref{sec:rl_model}, the RL methodology is presented in detail, including Upper Confidence Bound, Q-greedyUCB algorithm for \emph{average} reward.
The numerical simulation results will be discussed in Section~\ref{sec:simulation}. Concluding remarks and discussions of future research are provided in Section~\ref{sec:disscussion_conclusion}. To better illustrate, we list the important notations used in this paper in Table~\ref{table:1}, along with the descriptions.

\begin{table}
\caption{Notations and Definitions}
\label{table:1}
\setlength{\tabcolsep}{3pt}
\begin{tabular}{|p{70pt}<{\centering}|p{380pt}|}
\hline
Symbol& 
Definitions\\
\hline
$\mathcal{A} $& 
 The action space\\
$B$& 
Maximum buffer size\\
$M$& 
The number of data packets each data arrival \\
$C$& 
The maximum number of packets transmitted by the transmitter per time slot \\
$D_\pi$& 
The average latency \\
$E_\pi$& 
The average power consumption \\
$E_{th} $& 
The average power constraint \\
$\mathcal{S}$& 
 The state space\\
$P$& 
The transition probability matrix\\
$R $& 
The reward matrix\\
$q[t]$& 
The queue length at time slot $t$ \\
$c[t] $& 
The number of packets to be transmitted in time slot $t$ \\
$\tau[t]$& 
A binary variable that indicates whether there is new traffic arrival or not\\
$\alpha$& 
The parameter of the Bernoulli distribution\\
$d_{t}$& 
The queue delay in time slot $t$\\
$e_{t}$& 
The power consumption in time slot $t$\\
$\lambda$& 
The Lagrangian multiplier\\
$\delta$& 
The value that approximately an upper bound on the probability of the event\\
$\sigma$& 
The parameter that controls the degree of exploration\\
$\mu(s_t, a_t)$& 
The average reward of action $a_t$ at state $s_t$ up to time slot $t$\\
$\gamma_k$& 
The step size when the state-action pair $(s_t,a_t)$ is visited $k$ times\\
\hline
\end{tabular}
\label{tab1}
\end{table}

\section{System Model}
\label{sec:model}
In this paper, we consider a single-queue-single-server system with an adaptive transmitter, as shown in Fig.~\ref{fig:model}. The time is divided into discrete time slots, i.e., $t \in \{1,2,\cdots\}$. 
The traffic arrival is assumed to follow a Bernoulli distribution.
We define a binary variable $\tau[t]$ that indicates whether there is new traffic arrival or not (i.e., $\tau[t]=1$ if there is new traffic arrival).
The arrival probability $ \text{Pr}\{\tau[t] = 1\} = \alpha$ and $\text{Pr}\{\tau[t] = 0\} = 1 - \alpha$, where $\alpha$ denotes the parameter of the Bernoulli distribution.  

\begin{figure}[ht]
\centering
\vspace{-0.3cm}
\includegraphics[width=10cm]{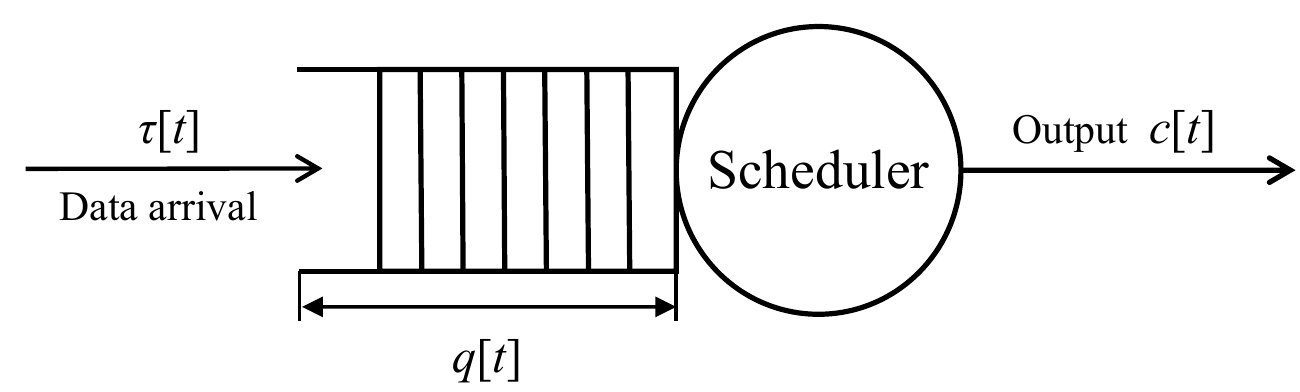}
\vspace{-0.3cm}
\caption{Our queue model with a single queue and a single server.} 
\label{fig:model}
\end{figure}

Arrived data packets are first added to a finite buffer with a maximum size of $B$. We assume that each data arrival contains $M$ data packets. The queue length is denoted by $q[t]$ at time slot $t$. The {\em Scheduler} determines the number of data packets to transmit based on the current queue length $q[t]$. Define $c[t]$ is the number of packets to be transmitted in time slot $t$. Due to the limitation of the transmitter, it can transmit up to $C$ data packets per time slot, and thus $c[t] \in \{0,1,\cdots,C\}$. Therefore, the queue length in the next time slot is given by
\begin{equation}\label{eq:next_state}
\begin{split}
q[t+1] = q[t] - c[t] + M\tau[t].\\
\end{split}
\end{equation}
To prevent buffer overflow or underflow, $c[t]$ needs to satisfy 
the following relationship:
\begin{equation}\label{eq:avoid}
\begin{split}
0\leq q[t] - c[t] \leq B - M.\\
\end{split}
\end{equation}
Moreover, the maximum buffer size and the number of packets arriving should satisfy $B>M$. 
Therefore, for a given queue length $q[t]$, 
$c[t]$ should satisfy
\begin{equation}
\begin{split}
\max(0,q[t]-B+M) \leq c[t] \leq \min(q[t],C).
\end{split}
\end{equation}
From Eq.~\eqref{eq:next_state}, the transition probability in the next time slot is given by

\begin{eqnarray} \label{eq:transition}
\begin{aligned}
&\text{Pr}\big\{q[t+1] = q'\ \vert \ c[t] = c,\ q[t] = q\big\}\\
&=\begin{cases}
    {\alpha  }         &      {\text{if}\; q' = q - c + M }, \\     
    {1 - \alpha}       &      {\text{if}\; q' = q - c  },\\
    {0}                    &  {\text{else.}}
\end{cases}
\end{aligned}
\end{eqnarray}
It shows that the future queue length depends only on the present queue length and the service rate. In other words, this process satisfies the Markov property, hence this system can be viewed as an MDP. 

In many communication systems, the latency decreases as the instantaneous service rate increases, while the power consumption increases.
Our goal is to achieve the minimum latency subject to the power constraint. Therefore, we have
\begin{equation}\label{eq:prob}
\begin{split}
{\bf P1: }& \min_\pi   D_\pi\\
& \text{subject to} \ E_\pi \leq E_{\text{th}} \\
\end{split}
\end{equation}
where $\pi$ represents a scheduling strategy, which describes a mapping from a state to probabilities of choosing available actions. $D_\pi$ and $E_\pi$ denote the average latency and energy consumption under strategy $\pi$, respectively. The average power constraint is denoted by $E_{\text{th}}$. 
We will show that the queue length corresponds to latency and it will be considered as a part of the reward function in Section~\ref{sec:rl_model}.

\section{Reinforcement learning methodology}
\label{sec:rl_model}
In this section, we first formulate the decision-making problem as an infinite-horizon MDP. Next, we describe the Q-greedyUCB algorithm and employ it to obtain an optimal scheduling policy. Then, we analyze the convergence of Q-greedyUCB algorithm. Finally, we discuss the properties of deterministic strategies and obtain optimal delay-power tradeoff curves.
\subsection{Reinforcement learning model}
\label{sec:rl_intro}
RL is widely used to solve optimization problems with model-unknown system (i.e., the state transition probability distribution is unknown). We model the RL framework on top of the system model presented in Section~\ref{sec:model}, as shown in Fig.~\ref{fig:rl_model}, where the {\em Scheduler} or transmitter is the agent, {\em Buffer} is the environment, the queue length $q[t]$ corresponds to the state $s_t$ and the number of packets to transmit in each time slot $t$, $c[t]$ is the action $a_t$.

In this paper, the scheduling problem is modeled as an infinite-horizon MDP. Our MDP model is defined by a 4-tuple, ($\mathcal{S}, \mathcal{A}, P, R$). $\mathcal{S}= \{0, 1, \cdots,{B} \}$ denotes the finite set of states (i.e., state space) and $s_t=q[t]\in\mathcal{S}$. $\mathcal{A}=  \{0, 1, \cdots, {C}\}$ represents the finite set of actions (i.e., action space), and $a_t=c[t] \in \mathcal{A}$. 
$P$ represents the transition probability matrix where $P_{ss'}^a$ is the probability of moving from current state $s$ to next state $s'$ under action $a$. $R$ is the reward matrix, where $R(s,a)$ denotes the immediate reward for the present state $s$  under action $a$.
The number of states and the number of actions are denoted by $S = |\mathcal{S}|$ and $A = |\mathcal{A}|$, respectively. 

\begin{figure}[t!]
\centering
\vspace{-0.3cm}
\includegraphics[width=10cm]{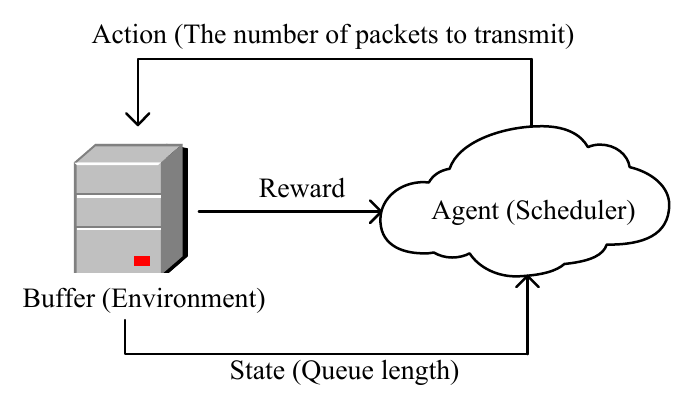}
\vspace{-0.3cm}
\caption{The RL structure of a communication system. }
\label{fig:rl_model}
\end{figure}

The problem {\bf P1} in Eq.~\eqref{eq:prob} is a constrained optimization problem. In order to obtain the optimal solution, we use the Lagrange multiplier method to solve it. By using Lagrangian relaxation technique, Eq.~\eqref{eq:prob} is transformed into the  following unconstrained problem.
\begin{equation}\label{eq:prob2}
\begin{split}
{\bf P2:} & \min_{\pi}  \ D_\pi + \lambda E_\pi  - \lambda E_{\text{th}}
\end{split}
\end{equation}
where $\lambda \geq 0$ is the Lagrangian multiplier.

Since the term $\lambda E_{\text{th}}$ has no effect on Eq.~\eqref{eq:prob2} under a given $E_{\text{th}}$ and specific $\lambda$, this term can be removed when we solve {\bf P2}. Then, we describe the immediate reward $R(s_t, a_t)$, as below:
\begin{equation}\label{eq:reward}
\begin{split}
& R(s_t, a_t) = - (d _t + \lambda e_t) \\
\end{split}
\end{equation}
where $d_t$ and $e_t$ are the delay and the energy consumption in time slot $t$, respectively. In Eq.~\eqref{eq:reward}, we can consider $\lambda$ as a tradeoff factor between latency and power consumption. Denote by $\overline{R}(s_t, a_t)$ the \emph{average} reward over the long run, and then the \emph{average} expected reward of policy $\pi$ is given by
\begin{equation}\label{eq:average_reward}
\begin{split}
\overline{R}^\pi(s_t, a_t) = \lim_{t \to \infty}E^\pi \{\frac{\sum_{t = 1}^\infty R(s_t, a_t)}{t}\}.  \\
\end{split}
\end{equation}

The Little's law states that the queue length is equal to the average delay times the data arrival rate~\cite{kulkarni2016modeling}. Therefore, the delay can be expressed as $d_t = q[t]/ (\alpha M)$ (i.e., the queue length is proportional to the average delay). For $e_t$, we assume a convex and increasing function of $c[t]$, such as $e_t ={ (c[t])}^2$.

\subsection{Q-greedyUCB algorithm for the average reward problem}
RL requires an exploration mechanism that feeds experience (samples) for diverse actions. The most common method is to use the $\varepsilon$-greedy policy that chooses an action randomly with some probability $\varepsilon$ and chooses a historically optimal action with probability $1-\varepsilon$~\cite{sutton2018reinforcement}. However, due to the randomness of the $\varepsilon$-greedy policy, the second-best action and the worst action can be chosen with the same probability. To avoid such inefficient exploration, the UCB method is widely used and proven to be asymptotically optimal to maximize the expected cumulative reward.
UCB is a type of multi-armed bandit algorithm based on optimism principles. The optimism principle refers to assigning a confidence bonus to each action based on the action-value observed so far. The higher the bonus value assigned to an action, the lower the agent's confidence in the action and the action will be selected more frequently in the future~\cite{lattimore2018bandit}. There are many variants of the UCB algorithm. Here, we use the most standard one, as below:
\begin{equation}\label{eq:UCB}
\begin{split}
a_t^{UCB} \leftarrow \argmax_{a_t\in \mathcal{A}}\left[ \mu(s_t, a_t) + \sigma\sqrt{\dfrac{\iota(k)}{k}} \right]
\end{split}
\end{equation}
where $\mu(s_t, a_t)$ denotes the average reward of action $a_t$ at state $s_t$
up to time slot $t$. 
$k$ indicates the number of times that $(s_t, a_t)$ has been selected up to time $t$.
We define the square-root term $\sigma\sqrt{\iota(k)/k}$, which is the confidence bonus that measures the uncertainty of the empirical mean of the current state-action pair $(s_t, a_t)$ in time slot $t$. In other words, this term measures how much the true mean of this action can be larger than the empirical mean. We will use $b_k = \sigma\sqrt{\iota(k)/k}$ for this term.
We will also use $b_k(s_t, a_t)$ to represent the confidence bonus term of the state-action pair $(s_t, a_t)$. 
The parameter $\sigma>0$ controls the degree of exploration. $\iota(k) = \ln(SAkt/\delta)$ is a log factor. The value $\delta$ is approximately an upper bound on the probability of the event. If $\delta$ is chosen to be very small, then the algorithm will exploit more (i.e., choose the best action in the history) and if $\delta$ is large, then the algorithm will explore more frequently.
Next, we formally describe the Q-greedyUCB algorithm.

In order to combine the Q-learning algorithm with the UCB algorithm to solve our problem, we modify the Q-learning with UCB algorithm with discounted rewards proposed by Dong \textit{et al}. in~\cite{dong2019q}. We adopt the Q-learning for the average reward problem algorithm proposed in~\cite{bertsekas1996neuro} and combine it with 
the UCB exploration strategy described in Eq.~\eqref{eq:UCB}.
Based on the above considerations, we propose the Q-greedyUCB algorithm, which is shown as the procedural form in Algorithm~\ref{alg:1}. 

\begin{algorithm}[ht]
\caption{Q-greedyUCB}\label{alg:1}
\begin{algorithmic}[1]
        \State Parameters: $\delta$, $\sigma$, $\varepsilon$.                                                                
		\State $Q(s, a), \hat{Q}(s, a), N(s, a)\leftarrow0,\forall s \in \mathcal{S}, a \in \mathcal{A}(s)$.
        \For{$t=1,2\cdots$}
                \State Choose an action $a_t\leftarrow \arg\max_{a'}\hat{Q}(s_t, a')$.
        \State Observe reward $R(s_t,a_t)$ and move to state $s_{t+1}$.
        \State $N(s_t, a_t)\leftarrow N(s_t, a_t) + 1$.
        \State $k\leftarrow N(s_t, a_t), \iota(k)\leftarrow \ln{\frac{SAkt}{\delta}}, b_k\leftarrow \sigma\sqrt{\dfrac{\iota(k)}{k}}$.
        \State Update $Q(s_t, a_t)$ according to Eq.~\eqref{eq:3}.
        \State $\hat{Q}(s_t, a_t)\leftarrow \min \big( \hat{Q}(s_t, a_t), Q(s_t, a_t)\big)$.
        \State $s_t \leftarrow s_{t+1}$.         
        \EndFor
\end{algorithmic}
\end{algorithm}

Q-greedyUCB uses the temporal difference method to update the action-value function. The update formula of Q-greedyUCB for the Q-function is given by
\begin{equation}\label{eq:3}
\begin{split}
Q(s_t, a_t) \leftarrow  (1 - \gamma_k)Q(s_t, a_t) + \gamma_k\Big( R(s_t, a_t) + \! &\max \limits_{a'\in \mathcal{A}{(s_{t+1})}} \!Q\!\left(s_{t+1}, a'\right) \!-\!  \max \limits_{v\in \mathcal{A}{(i)}} \!Q\!\left(i, v\right) \!+\! b_k\Big)
\end{split}
\end{equation}
where $s_{t+1}$ is a new state after action $a_t$ is executed in the present state $s_t$, and $a'$ is the action to be selected in $s_{t+1}$. 
Note that the confidence bonus term $b_k$ is added to the update term from the standard Q-learning.
We remind that $k = N(s_t, a_t)$ represents the number of times that the state-action pair $(s_t, a_t)$ is experienced up to time slot $t$.
${\gamma_k \in [0, 1]}$ represents the step size when the state-action pair $(s_t,a_t)$ is visited $k$ times. 
$\gamma_k=0$ indicates that the agent learn nothing, while when $\gamma_k = 1$, the agent only considers the current estimate. In RL the choice of the step size is crucial. In general, we set the step size equal to some constant value. However, it is not easy to obtain an optimal strategy when the problem is stationary~\cite{bertsekas1996neuro,sutton2018reinforcement}. Therefore, a feasible method is to set the step size to decrease over time. We set $\gamma_k = \phi/(k+\theta)$ in the proposed algorithm. In the standard Q-learning algorithm, the step size depends on current time $t$ (e.g., $\gamma_k = 1/t$). However, for the Q-greedyUCB algorithm, we need to consider how many  times each state-action pair is selected up to time $t$. Therefore, we set the step size as $\gamma_k = \phi/(k+\theta)$ for some positive constants $\phi$ and $\theta$. It means that the step size of the update is selected based on the number of times the state-action pair is selected. In Section~\ref{sec:convergence}, we will  discuss the required properties of the step size to guarantee convergence. The term $\max_{v\in {A(i)} } Q\left(i, v\right)$ denotes the optimal average reward expected to converge in each iteration, where $i$ represents the reference state. In line 9, $\hat{Q}(s,a)$ represents the historical minimum value of the Q-function. 

Our Q-greedyUCB algorithm decides the best action to play in a certain state by estimating the value of the state-action pair $Q(s, a)$ (i.e., Q-value). Each Q-value is stored in a matrix with the size $S\times A$, which we  call the Q-table. Before the algorithm starts learning, we need to set an arbitrary initial value for each $Q(s,a)$. In this paper, we set all $Q(s, a)=0$ initially. At the beginning of each time slot, the agent chooses an action 
based on UCB exploration policy, which corresponds to line 4 of the Algorithm~\ref{alg:1}. An agent that executes an action in a state will receive a reward and enters a new state, and then Q-table is updated. Each Q-value will not change after an adequate number of iterations. In other words, all Q-value in the Q-table will converge to the optimal value. Let $\pi^*$ denote an optimal scheduling policy. 
We can get an optimal strategy according to optimal Q-values, which is given by
\begin{equation}\label{eq:2}
\begin{split}
\displaystyle
\pi^*(\hat{a}|s)=\frac{1}{|\mathcal{A}^*(s)|} 
\text{ if }
\hat{a} \in \mathcal{A}^*(s) \\
\end{split}
\end{equation}

where $\pi^*(\hat{a}|s)$ denotes the probability to execute action $\hat{a}$ in state $s$, 
$\mathcal{A}^*(s) = \argmax_{a\in\mathcal{A}(s)}Q^*\left(s, a\right )$ is the set of optimal actions in state $s$,
and $Q^*$ refers to the optimal action-value function. 
Otherwise, $\pi^*({a}|s)=0$ if $a \notin \mathcal{A}^*(s)$ for non-optimal actions. 

\subsection{Analysis of convergence}\label{sec:convergence}
Now, we discuss the convergence of the Q-greedyUCB algorithm. In~\cite{abounadi2002stochastic}, for the average reward problem, the author has proposed a general framework for proving convergence based on the ODE (Ordinary Differential Equations) method. We first introduce two crucial assumptions from~\cite{abounadi2002stochastic}.  
\begin{assumption} \label{asump:1}
The step size $\gamma_k$ satisfies the following:
\begin{enumerate}
    \item ${\gamma_k}$ is an ideal tapering step size,
        i.e.,
        \begin{enumerate}
            \item $\sum_{k = 0}^\infty \gamma_k = \infty$.
            \item $\gamma_{k+1} \leq \gamma_k$ for sufficiently large $k$.
            \item There exists $g\in (0,1)$ such that 
                \begin{equation}
                    \sum_{k=0}^\infty\gamma_k^{1+\eta}<\infty, \quad \text{for} \quad \eta \geq g. \nonumber
                \end{equation}
            \item Let $\Gamma_k=\sum_{m=0}^k\gamma_m$. Then, for all $g \in (0,1),$
                \begin{equation}\label{eq:10}
                    \begin{aligned}
                    &\sup_k\frac{\gamma_{\lfloor gk \rfloor}}{\gamma_k}< \infty, 
                    \\
                    \end{aligned}
                \end{equation}
                \begin{equation}\label{eq:11}
                    \begin{aligned}
                    \lim_{k \to \infty}\frac{\Gamma_{\lfloor Gk \rfloor}}{\Gamma_k} = 1,  \text{uniformly in $G$} \in [g,1]
                    \end{aligned}
                \end{equation}
                where $\lfloor x \rfloor$ is the greatest integer less than or equal to $x$ (i.e., the floor function).
        \end{enumerate}
    \item There exists $\Delta>0$, for each state-action pair $(s, a)$ to be updated infinitely often such that
        \begin{equation}
            \begin{aligned}
            \liminf\limits_{t \to \infty}\frac{N(s_t,a_t)}{t+1} \ge \Delta \quad
            \text{with probability } 1, \forall s \in \mathcal{S}, a \in \mathcal{A}(s).
            \end{aligned}
        \end{equation}
\end{enumerate}
\end{assumption}

\begin{assumption}\label{asump:2}
A scalar real-valued function $f$ has the following properties:
\begin{enumerate}
    \item $f$ is Lipschitz continuous, i.e., $\vert f(x)-f(y)\vert\leq L_f\Vert{x - y}\Vert_\infty$ for some $L_f \in \mathbb{R}$.
    \item $f(x + ru) = f(x) + rf(u)$,\quad for $r \in  \mathbb{R}$.
    \item $f(u) < 0$.\footnote{In~\cite{abounadi2002stochastic}, this term is positive, $f(u) > 0$, because the default reward is a positive value. Since we set the reward as a negative value, $f(u) < 0$.}
\end{enumerate}
where $u=(1,\cdots,1)$ is a vector whose entries are all 1. 
\end{assumption}
Based on the above assumptions, we have the following.
\begin{theorem}\label{the:1}
If $\gamma_k = \phi/(k + \theta)$ for some positive constants $\phi$ and $\theta$,
by Algorithm~\ref{alg:1}, (1) Assumption~\ref{asump:1} holds, (2) the sequence $Q(s_t, a_t)$ is bounded, and (3) $Q$ converges to $Q^*$.
\end{theorem}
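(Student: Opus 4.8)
The plan is to instantiate the ODE-based convergence result of Abounadi \emph{et al.}~\cite{abounadi2002stochastic} for our particular step size $\gamma_k = \phi/(k+\theta)$: I would verify the two standing hypotheses (Assumption~\ref{asump:1} and Assumption~\ref{asump:2}) and the stability prerequisite, and then quote their theorem to obtain boundedness and convergence. Accordingly I would organize the argument into the three claimed conclusions, treating the step-size checks as routine, the visitation frequency as the crux, and boundedness/convergence as consequences of the cited framework.

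For claim (1), that Assumption~\ref{asump:1} holds, I would check the four tapering conditions by direct computation. Condition (a) holds because $\sum_k \phi/(k+\theta)$ diverges logarithmically; monotonicity (b) is immediate since $k \mapsto \phi/(k+\theta)$ is decreasing. For (c), $\sum_k \bigl(\phi/(k+\theta)\bigr)^{1+\eta}$ is a convergent $p$-series whenever $\eta>0$, so any $g \in (0,1)$ works. For (d) I would use $\gamma_{\lfloor gk\rfloor}/\gamma_k = (k+\theta)/(\lfloor gk\rfloor + \theta) \to 1/g$, which is bounded in $k$, together with $\Gamma_k = \sum_{m=0}^k \phi/(m+\theta) \sim \phi \ln k$, so that $\Gamma_{\lfloor Gk\rfloor}/\Gamma_k \to 1$ uniformly for $G \in [g,1]$. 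The second part of Assumption~\ref{asump:1}, the positive visitation rate $\liminf_{t} N(s_t,a_t)/(t+1) \ge \Delta$, is the substantive part, and I would argue it from the optimism mechanism built into Eq.~\eqref{eq:3}. Since $b_k = \sigma\sqrt{\iota(k)/k}$ with $\iota(k)=\ln(SAkt/\delta)$ carries the running time $t$, any pair whose count $k$ stalls sees its bonus grow without bound in $t$; this inflates its $Q$-value until it becomes the greedy choice, contradicting starvation. Combined with irreducibility of the induced chain (guaranteed by the Bernoulli arrival structure of Eq.~\eqref{eq:transition}), this forces every $(s,a)$ to be sampled at a uniformly positive rate $\Delta>0$.

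For claim (2), boundedness of the iterates, I would invoke the stability framework of~\cite{abounadi2002stochastic}. The rewards $R(s_t,a_t) = -(d_t + \lambda e_t)$ are uniformly bounded because $B$ and $\mathcal{A}$ are finite, so the martingale noise in the update has bounded conditional variance, and the normalization term $-\max_{v}Q(i,v)$ pins the relative values to the reference state $i$ and renders the scaled limiting ODE globally asymptotically stable at the origin; this yields almost-sure boundedness of $Q(s_t,a_t)$. For claim (3), I would verify that the offset function subtracted in Eq.~\eqref{eq:3} meets Assumption~\ref{asump:2}: it is $1$-Lipschitz in $\Vert\cdot\Vert_\infty$, it is affine along the all-ones direction so that $f(x+ru)=f(x)+rf(u)$, and the sign of $f(u)$ matches the negative-reward convention noted in the footnote. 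The associated ODE is then the relative-value-iteration flow whose equilibrium (up to the normalization) is $Q^*$, so applying the main theorem of~\cite{abounadi2002stochastic} with claims (1) and (2) gives $Q(s_t,a_t) \to Q^*$ almost surely.

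I expect the main obstacle to be the positive-visitation estimate in Assumption~\ref{asump:1}. Unlike $\varepsilon$-greedy exploration, here randomness is not injected into the action rule at all—line~4 of Algorithm~\ref{alg:1} selects greedily with respect to $\hat{Q}$—so exploration is induced only indirectly through the optimism bonus carried inside the update. Turning the informal chain ``stalled count $\Rightarrow$ diverging bonus $\Rightarrow$ forced reselection'' into a rigorous statement, uniformly over all $(s,a)$ and with an explicit constant $\Delta>0$, is the delicate step and is where I would concentrate the effort.
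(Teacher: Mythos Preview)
Your overall scaffolding---verify the tapering conditions on $\gamma_k=\phi/(k+\theta)$, cast the update as an asynchronous stochastic approximation, and invoke the ODE framework of~\cite{abounadi2002stochastic}---matches the paper's route, and your treatment of conditions (a)--(d) is essentially the same computation the paper carries out.

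Where you diverge is precisely on the visitation condition, and there is a real gap in your mechanism. You argue that a starved pair $(s,a)$ eventually forces its own reselection because ``$b_k=\sigma\sqrt{\iota(k)/k}$ with $\iota(k)=\ln(SAkt/\delta)$ carries the running time $t$, so a stalled count sees its bonus diverge.'' But in Algorithm~\ref{alg:1} the bonus $b_k$ is computed in line~7 and injected into $Q(s_t,a_t)$ in line~8 \emph{only for the pair that was just played}; action selection in line~4 uses $\hat{Q}$, and $\hat{Q}(s,a)$ is touched only in line~9, again only for the played pair. Consequently, if $(s,a)$ is never selected after some time $t_0$, both $Q(s,a)$ and $\hat{Q}(s,a)$ are frozen: no bonus ever reaches them, and nothing in the dynamics ``inflates'' their value. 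The chain ``stalled count $\Rightarrow$ diverging bonus $\Rightarrow$ forced reselection'' therefore does not go through as stated. The paper handles this point differently: rather than establishing $\liminf_t N(s_t,a_t)/(t+1)\ge\Delta$, it argues that Property~2 of Assumption~\ref{asump:1} is \emph{not required} under the UCB rule, appealing to irreducibility of the state chain and to the claim that once confidences are high the greedy UCB choice already coincides with the optimal action. Whether or not one finds that argument fully rigorous, it is the route the paper takes, and it is not the one you propose.

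A second, smaller omission: the paper does not stop at checking Assumption~\ref{asump:2} for $f(\vec Q)=\max_v Q(i,v)$; it also writes out the operator $H$ of Eq.~\eqref{eq:mapping_nonvector} and verifies explicitly that $H$ is sup-norm non-expansive and satisfies $H(\vec Q+ru)=H\vec Q+ru$. These two properties of $H$ (not just of $f$) are what the cited framework needs to conclude global asymptotic stability of the ODE~\eqref{eq:ODE}, so your claim~(3) should include that verification rather than only the properties of $f$.
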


\begin{proof}[Proof]
We first prove that the step size satisfies Property 1 in Assumption~\ref{asump:1}. Assume $\gamma_k = \phi/(k + \theta)$ for some positive constants $\phi$ and $\theta$.
Then for property (a), we have
\begin{equation}
\begin{aligned}
&\sum_{k=0}^\infty\gamma_k = \int_{0}^\infty \frac{\phi}{k + \theta}\,dk = \phi\ln(k+\theta)|_0^\infty = \infty.\\
\end{aligned}
\end{equation}
It is easy to check the property (b) holds. For property (c), we have,
\begin{equation}
\begin{aligned}
\sum_{k=0}^\infty\gamma_k^{1+\eta}  = \int_{0}^\infty \left(\frac{\phi}{k + \theta}\right)^{(1+\eta)}\,dk = -\frac{\phi^{1+\eta}}{\eta(k+\theta)^\eta}|_0^\infty = \frac{\phi^{1+\eta}}{\eta\theta^\eta} < \infty.
\end{aligned}
\end{equation}
For property (d), we first consider the Eq.~\eqref{eq:10},
\begin{equation}
\begin{aligned}
&\frac{\gamma_{(\lfloor gk \rfloor)}}{\gamma_k} = \frac{\frac{\phi}{\lfloor gk \rfloor + \theta}}{\frac{\phi}{k + \theta}} = \frac{k+\theta}{\lfloor gk \rfloor + \theta} \leq \frac{1}{g} < \infty.\\
\end{aligned}
\end{equation}
The left-hand side of Eq.~\eqref{eq:11} is equivalent to
\begin{equation}
\begin{aligned}
&\lim_{k \to \infty}\frac{\Gamma_{\lfloor Gk \rfloor}}{\Gamma_k} = \frac{\int_{0}^{Gk}\gamma_k\,dk}{\int_{0}^{k}\gamma_k\,dk}
\end{aligned}
\end{equation}
by L'Hôpital's rule, reduces to 
\begin{align}\label{eq:16} \nonumber
\lim_{k \to \infty}\frac{\int_{0}^{Gk}\gamma_k\,dk}{\int_{0}^{k}\gamma_k\,dk} =\lim_{k\to\infty} \frac{G \gamma_{Gk}}{\gamma_k} = \lim_{k\to\infty} \frac{\frac{G\phi}{Gk + \theta}}{\frac{\phi}{k+\theta}} = \lim_{k\to\infty} \frac{G(k+\theta)}{Gk+\theta} = \lim_{k\to\infty} \frac{G + \frac{G\theta}{k}}{G + \frac{\theta}{k}} = 1.  
\end{align}
We complete the proof of Property 1. 

Next, we prove that Property 2 in Assumption~\ref{asump:1} is not required in this paper. Generally, to satisfy Property 2, $\varepsilon$-greedy policy is usually used to select actions. According to the law of large numbers, it can guarantee that each state-action pair is selected infinite times as $t \to \infty$. However, UCB policy is adopted in this paper. If we fix a state $s$, the UCB policy always selects the action with the highest reward value for each iteration under this state. From Eq~\eqref{eq:UCB}, we can see that if an action has been chosen only few times, the bonus term will be large. As a result, the confidence in this action will be low and thus leading to more frequent exploration. When there is high enough confidence in all the actions, UCB policy will always choose the action with the highest reward value without any additional exploration.
Moreover, if we draw a directed graph with the state space $\mathcal{S}$ and an edge from $s \in \mathcal{S}$ to $s' \in \mathcal{S}$ whenever the transition probability from $s$ to $s'$, $P_{ss'} > 0$ almost surely. Therefore, the graph is irreducible, i.e., there exists a path from any state $s \in \mathcal{S}$ to any state $s' \in \mathcal{S}$. We can draw the conclusion that the UCB policy will eventually find the optimal policy in each state, without requiring each state-action pair to be selected infinitely as $t \to \infty$.
This makes the convergence condition of Algorithm~\ref{alg:1} more relaxed. The proof of Assumption~\ref{asump:1} is complete. 

Under the Assumption~\ref{asump:1}, we can use the analysis of Borkar~\cite{borkar1998asynchronous} to establish the relationship between Eq.~\eqref{eq:3} and ODE, as below:
\begin{equation}\label{eq:ODE}
\begin{split}
\vec{Q}^{'} = H\vec{Q} - \vec{Q} - f(\vec{Q})u,    
\end{split}
\end{equation}
where $\vec{Q}^{'}$ denotes the derivative of vector $\vec{Q}$ and vector $u$ has the same length as vector $\vec{Q}$. $\vec{Q}$ is the vector in which each of the elements represents Q-value for each of state-action pair, i.e., $\vec{Q} = [Q(1,1),\cdots,Q(1, A),Q(2,1), \cdots,Q(2,A),\cdots,\\Q(S,1), \cdots,Q(S,A)]$. In~\cite{abounadi2002stochastic,bertsekas1996neuro}, it has been shown that $f(\vec{Q}) = \max_{v\in \mathcal{A}{(i)}} Q\left(i, v\right)$ satisfies the Assumption~\ref{asump:2}. $H$ is the mapping defined by
\begin{equation}\label{eq:mapping}
\begin{split}
H\vec{Q} = \vec{P}\left(\vec{R} + \vec{V} + \vec{b}\right),
\end{split}
\end{equation}
where vectors $\vec{R}$ and $\vec{b}$ refer to the immediate reward and confidence bonus for all state-action pairs, respectively. $\vec{P}$ is the transition probability matrix and $\vec{V} = \max_{a'\in \mathcal{A}{(s')}}Q\left(s', a'\right)u$. For the sake of intuition and proof, we express Eq.~\eqref{eq:mapping} as a non-vector form, as below:
\begin{equation}\label{eq:mapping_nonvector}
\begin{split}
(HQ)(s, a) = \sum_{s'\in \mathcal{S}} P_{ss'}^a\Big(\!R(s, a) + \!\max \limits_{a'\in \mathcal{A}{(s')}} \!Q\left(s', a'\right) + b_k(s, a)\!\Big).
\end{split}
\end{equation}

The convergence proof given in~\cite{abounadi2002stochastic} indicates that the mapping of $H$ needs to satisfy the following properties.
\begin{enumerate}
    \item $H$ is non-expansive with respect to the sup-norm:
        \begin{equation}\label{eq:property1}
             \Vert{HQ_1-HQ_2}\Vert_\infty\leq \Vert{Q_1 - Q_2}\Vert_\infty, \quad \forall Q_1, Q_2,
        \end{equation}
        where $\Vert{Q}\Vert_\infty = \max_{(s, a)}\vert Q(s, a)\vert$.
    \item $H$ satisfies:
        \begin{equation}\label{eq:property2}
            H(\vec{Q} + ru) = H\vec{Q} + ru, \quad \forall r \in \mathbb{R}.
        \end{equation}
\end{enumerate}

Now we prove that Eq.~\eqref{eq:mapping_nonvector} satisfies Property 1 (Eq.~\eqref{eq:property1}) for all $Q_1, Q_2$. For the sake of analysis, we consider the confidence bonus term is constant (In fact, this term decreases monotonically, and its value decreases to 0 as $t \to \infty$), where we define $(s^*, a^*) = \argmax_{(s,a)}|Q_1(s,a) - Q_2(s,a)|$ and use it to represent the state-action pair that has the maximum gap between $Q_1(s, a)$ and $Q_2(s, a)$. 
Then for all $(s,a)$,
\begin{align*}
\Vert{HQ_1 - HQ_2}\Vert_\infty &= \max_{(s, a)}\vert \sum_{s'\in \mathcal{S}}P_{ss'}^a[R(s, a) + \max \limits_{a'\in \mathcal{A}{(s')}} Q_1\left(s', a'\right) + b_k(s, a) \\
&- R(s, a) - \max \limits_{a'\in \mathcal{A}{(s')}} Q_2\left(s', a'\right) - b_k(s, a)]\vert\\
& = \max_{(s, a)}\vert \sum_{s'\in \mathcal{S}}P_{ss'}^a[\max \limits_{a'\in \mathcal{A}{(s')}} Q_1\left(s', a'\right) - \max \limits_{a'\in \mathcal{A}{(s')}} Q_2\left(s', a'\right)]\vert\\
& \leq \max_{(s, a)}\sum_{s'\in \mathcal{S}}\vert P_{ss'}^a[\max \limits_{a'\in \mathcal{A}{(s')}} Q_1\left(s', a'\right) - \max \limits_{a'\in \mathcal{A}{(s')}} Q_2\left(s', a'\right)]\vert\\
& = \max_{(s, a)}\sum_{s'\in \mathcal{S}}P_{ss'}^a\vert \max \limits_{a'\in \mathcal{A}{(s')}} Q_1\left(s', a'\right) - \max \limits_{a'\in \mathcal{A}{(s')}} Q_2\left(s', a'\right)\vert\\
& \leq \max_{(s, a)}\sum_{s'\in \mathcal{S}}P_{ss'}^a\vert Q_1\left(s^*, a^*\right) - Q_2\left(s^*, a^*\right)\vert\\
& = \vert Q_1\left(s^*, a^*\right) - Q_2\left(s^*, a^*\right)\vert\\
& = \max_{(s, a)}\vert Q_1\left(s, a\right) - Q_2\left(s, a\right)\vert\\
& = \Vert{Q_1 - Q_2}\Vert_\infty.
\end{align*}
For Property 2 (Eq.~\eqref{eq:property2}), for all $(s,a)$, we have,
\begin{align*}
(H(\vec{Q} + ru))(s,a) &= \sum_{s'\in \mathcal{S}} P_{ss'}^a\left(R(s, a) + \max \limits_{a'\in \mathcal{A}{(s')}} Q\left(s', a'\right) + b_k(s, a) + r\right)\\
& = \sum_{s'\in \mathcal{S}} P_{ss'}^a\left(R(s, a) + \max \limits_{a'\in \mathcal{A}{(s')}} Q\left(s', a'\right) + b_k(s, a) \right) + \sum_{s'\in \mathcal{S}} P_{ss'}^a r\\
& = \sum_{s'\in \mathcal{S}} P_{ss'}^a\left(R(s, a) + \max \limits_{a'\in \mathcal{A}{(s')}} Q\left(s', a'\right) + b_k(s, a) \right) + r\\
& = (HQ)(s,a) + r.
\end{align*}
Thus, we can conclude that the mapping $H$ satisfies the Properties 1 and 2 (i.e., Eqs.~\eqref{eq:property1} and~\eqref{eq:property2}). Besides, Assumptions~\ref{asump:1} and~\ref{asump:2} hold for the step size $\gamma_k$ and function $f$, respectively.  
Now, we can use the convergence proof structure proposed by Abounadi in~\cite{abounadi2002stochastic} (i.e., the convergence theorem for the asynchronous algorithm). Under the above analysis, we can conclude that the ODE (Eq.~\eqref{eq:ODE}) has a unique globally stable equilibrium $Q^*$. It means that $Q$ converges to $Q^*$, the proof is complete.
\end{proof}

\subsection{optimal delay-power trade-off curve}
In this subsection, we obtain the feasible regions of average delay-power points for all deterministic policies. Furthermore, we introduce a significant property for all deterministic policies (i.e., all deterministic policies are threshold-based). 

In~\cite{chen2017delay}, the authors reveal that the optimal scheduling policies are threshold-based. In other words, the number of packets transmitted per time slot is proportional to the queue length. Moreover, the feasible average latency and power region is a convex polygon. 
Define $\hat{q} \ \text{and} \ q \in \{0,1,\cdots,B\}$, and then we have 
\begin{equation}\label{eq:threshold-based}
\begin{split}
c_{\pi}(q) \leq c_{\pi}(\hat{q}) \quad \text{if and only if} \ q \leq \hat{q},
\end{split}
\end{equation}
where $c_{\pi}(q)$ and $c_{\pi}\hat{q}\in \{0,1,\cdots,C\}$ refer to the number of packets transmitted for a deterministic policy $\pi$ when the queue length $q[t] = q$ and $q[t] = \hat{q}$, respectively. 
According to Formula~\eqref{eq:threshold-based}, we can obtain all deterministic strategies that satisfy the threshold-based property.
\begin{figure}[ht]
  \centering
    \includegraphics[width=10cm]{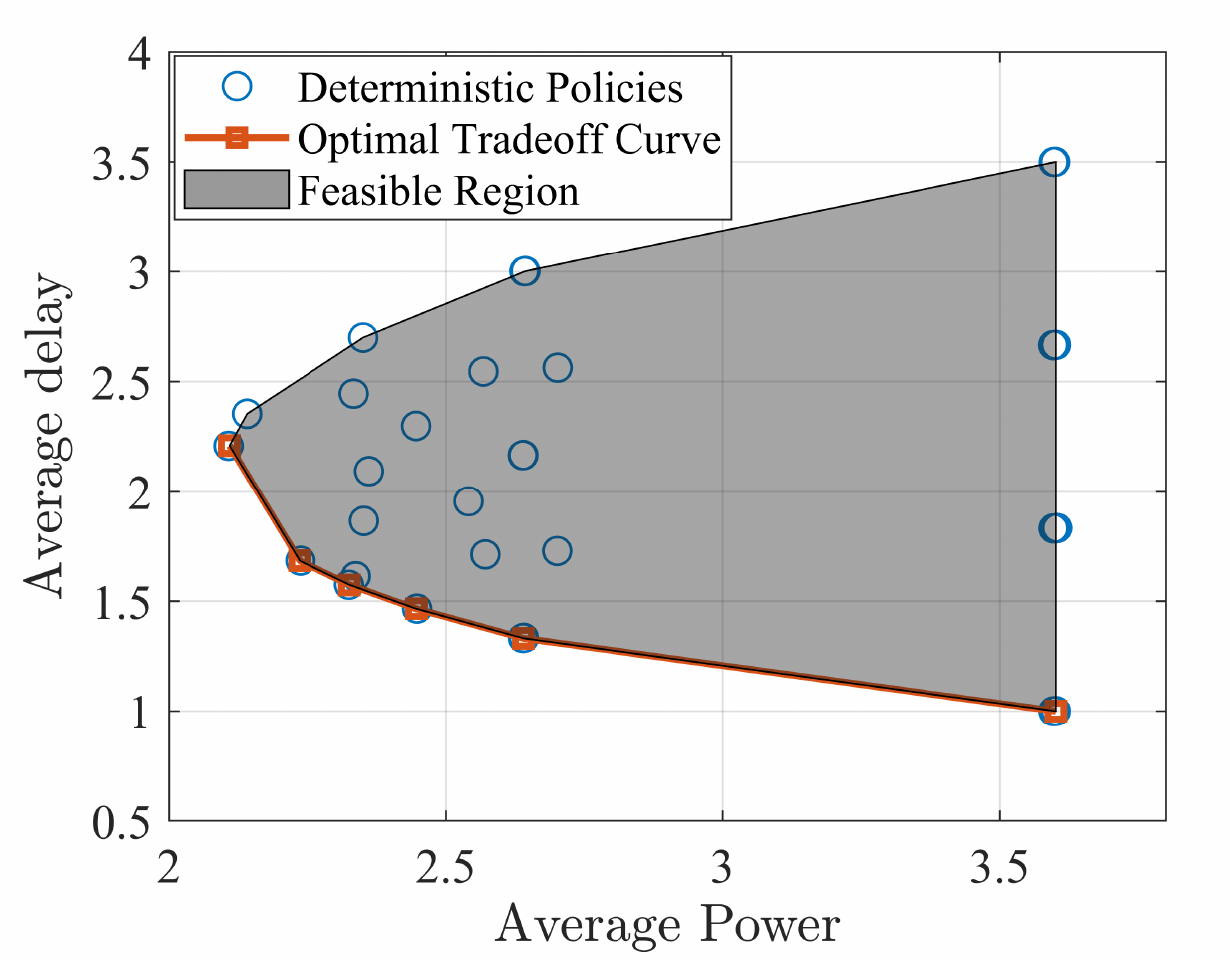}
    \caption{Points corresponding to all deterministic strategies and optimal delay-power tradeoff curves. $B = 6$, $M = 3$, $C = 3$, $\sigma = 1$, $\varepsilon = 0.01$, $\delta = 0.01$, and $\lambda = 1$).}
    \label{fig:delay power plane}
\end{figure}

Fig.~\ref{fig:delay power plane} depicts the delay-power pairs of all deterministic strategies that satisfy threshold-based property, which are shown in "o" markers (To simplify the figure, we set $B = 6$ and $M = C = 3$). From the Corollary given in~\cite[Corollary 3]{chen2017delay}, we have that the optimal delay-power tradeoff curve is decreasing and piecewise linear, is shown in Fig.~\ref{fig:delay power plane} by the red line. We solve the unconstrained problem {\bf P2} with a specific $\lambda$ by using the Algorithm~\ref{alg:1}. Therefore, the optimal scheduling policy is on the optimal delay-power tradeoff curve.

\section{Simulation Result}
\label{sec:simulation}
In this section, we present the simulation result. To evaluate the performance of the Q-greedyUCB algorithm, we implement a MATLAB simulation with different input parameters. There are two cases:~\romannumeral1) change the maximum buffer size $B$, the number of packets in each data arrival $M$, and the maximum number of transmitted packets in each time slot $C$ under the condition of constant arrival rate $\alpha$.~\romannumeral2) change the parameters $\alpha$ under the condition of constant $B$, $M$, and $C$. Additionally, we compare the Q-greedyUCB algorithm with Policy Iteration (PI) in~\cite{chen2017delay}, the Q-learning algorithm in~\cite{zhao2019reinforcement} and the Average-payoff RL algorithm (ARL) in~\cite{singh1994reinforcement}.

For the simulation scenario, we consider the case where $C = 4$ and $C = 5$. When we set $C = 4$, that means we can adopt four optional modulations to transmit 1, 2, 3, or 4 packets in a timeslot, respectively.
We set $\gamma_k = 1/(k+1)$, $\alpha = 0.4$,  $\sigma = 1$, $\varepsilon = 0.01$, $\delta = 0.01$ and $\lambda = 1$. We use $e_t ={ (c[t])}^2$ for the power consumption.
The optimal scheduling strategies generated by policy iteration (PI), ARL, Q-learning, and Q-greedyUCB algorithms are depicted in Fig.~\ref{fig:policy} under different parameters of $B$, $M$, and $C$.
We can see that the optimal strategy generated by Q-greedyUCB is completely overlapping with the optimal strategy generated by PI, ARL, and Q-learning algorithm. The global optimal scheduling strategy obtained by Algorithm~\ref{alg:1} minimizes the unconstrained optimization problem {\bf P2}.
\begin{figure}[t!]
  \centering
   \subfigure[$B = 10$, $M = 5$, $C = 4$.]{
         \includegraphics[width=7cm]{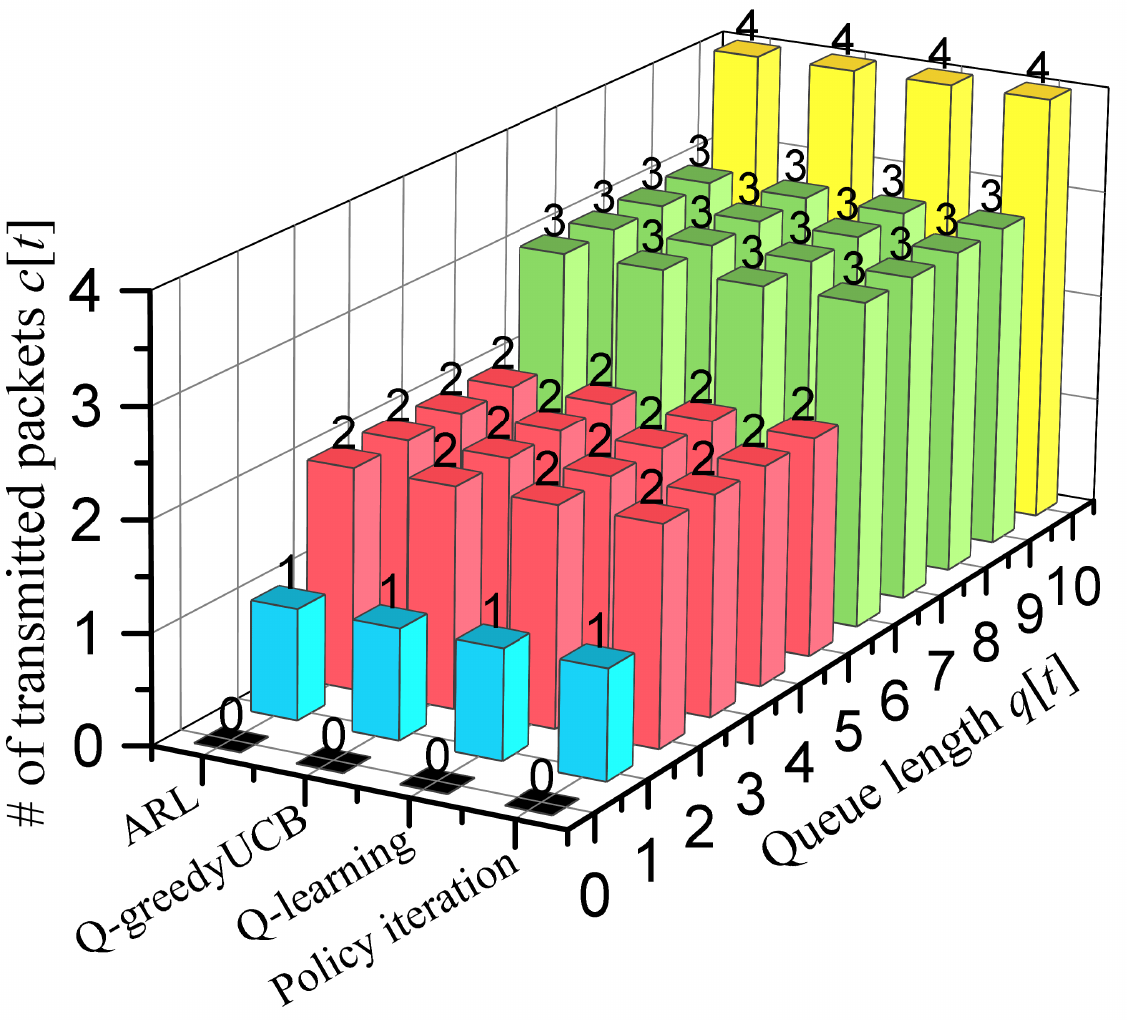} \label{fig:a1}
    }
    \subfigure[$B = 12$, $M = 5$, $C = 5$.]{
         \includegraphics[width=7cm]{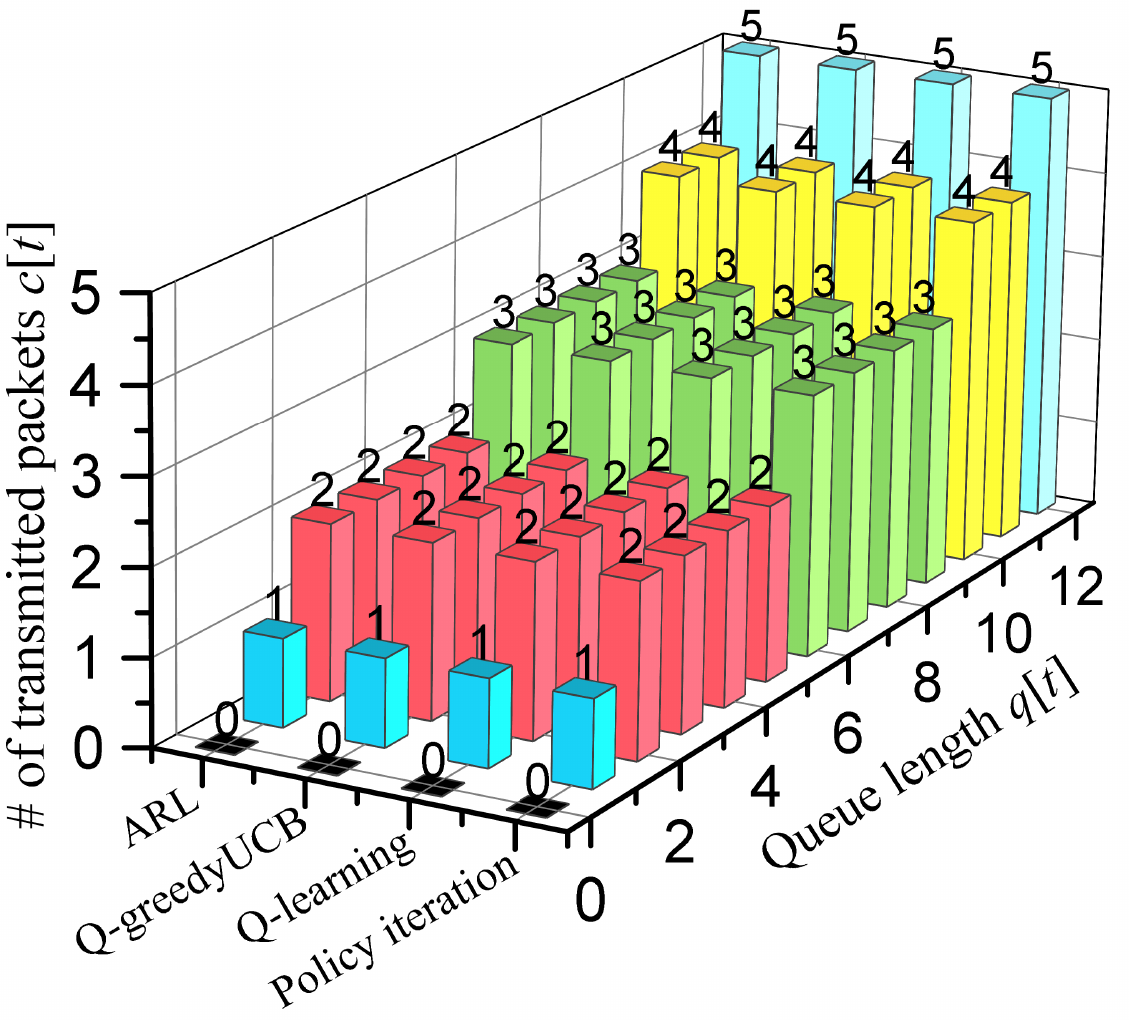}\label{fig:a2}
    }
    \caption{Deterministic optimal scheduling policies of the ARL, Q-learning, policy iteration, and the Q-greedyUCB algorithm. ($\alpha = 0.4$, $\sigma = 1$, $\varepsilon = 0.01$, $\delta = 0.01$, and $\lambda = 1$).}
    \label{fig:policy}
\end{figure}

Based on the input parameters in Fig.~\ref{fig:policy}, the average reward curve during the learning period for ARL, Q-learning, and Q-greedyUCB algorithm is shown in Fig.~\ref{fig:average_cost}. It is shown that the average reward eventually converges to -5.6 and -7.64 (i.e., optimal average reward). 
Furthermore, Q-greedyUCB generally performs better than Q-learning and ARL in terms of average reward, except in the first $400$ steps in this case, when it selects randomly among the as-yet untried actions. This is the reason why the average reward of Q-learning and ARL in the middle of the curve is better than the Q-greedyUCB algorithm, but in the long run, Q-greedyUCB is more efficient than Q-learning and ARL. 

From the trend of the curve in Fig.~\ref{fig:average_cost}, we can conclude that the average reward of Q-greedyUCB, Q-learning and ARL will eventually meet with the optimal average reward line over time\footnote{Due to the limitations of computer storage, we only show the results of $1.0 \times 10^7$ iterations.}. However, the Q-greedyUCB algorithm takes less time (about $1.0 \times 10^6$ iterations) compared with Q-learning and ARL algorithm. Since the $\varepsilon$-greedy exploration policy is adopted in Q-learning and ARL, the agent always has the opportunity to explore with a certain probability. In contrast, the UCB policy always selects the action that has the highest reward without further exploration when having enough confidence (i.e., $b_k \to 0$).

For UCB exploration policy, the performance of the algorithm is usually measured by cumulative regret. Regret refer to the difference between the total expected reward using policy $\pi$ for $t$ rounds(or iterations) and the total expected reward obtained by the agent over $t$ iterations. To better show the regrets of different algorithms, the average reward curves of different algorithms in Fig.~\ref{fig:average_cost} are shown in different graphs, as shown in Fig.~\ref{fig:regret}\footnote{Since the average reward curves of ARL and Q-learning algorithms are extremely similar, we only use the average reward curve of Q-learning.}. The area of the grey region represents the total regret in Fig.~\ref{fig:regret}. The simulation results demonstrated that the regret of Q-greedyUCB is smaller compared with Q-learning and ARL. Also, we can see that when the time step exceeds $10^6$, the regret is almost 0, which means that the agent always chooses the optimal scheduling policy. This also shows that the convergence speed of Q-greedyUCB algorithm is faster. 

\begin{figure}[t!]
  \centering
  \vspace{-0.3cm}
   \subfigure[$B = 10$, $M = 5$, $C = 4$.]{
         \includegraphics[width=7cm]{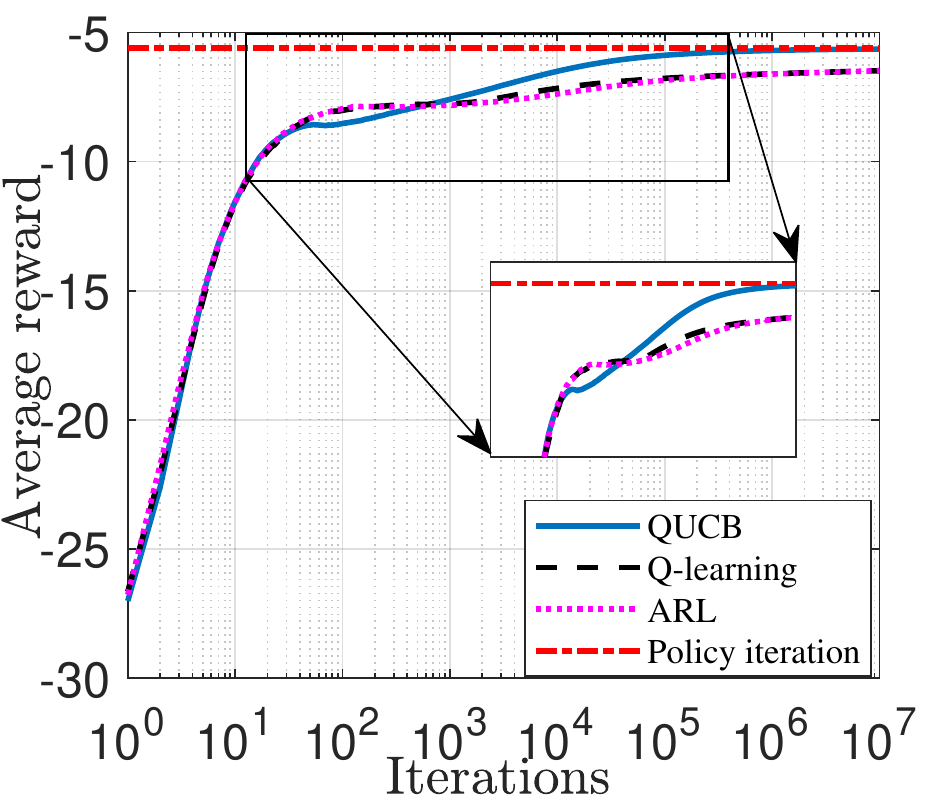} \label{fig:p1}
    }
    \subfigure[$B = 12$, $M = 5$, $C = 5$.]{
         \includegraphics[width=7cm]{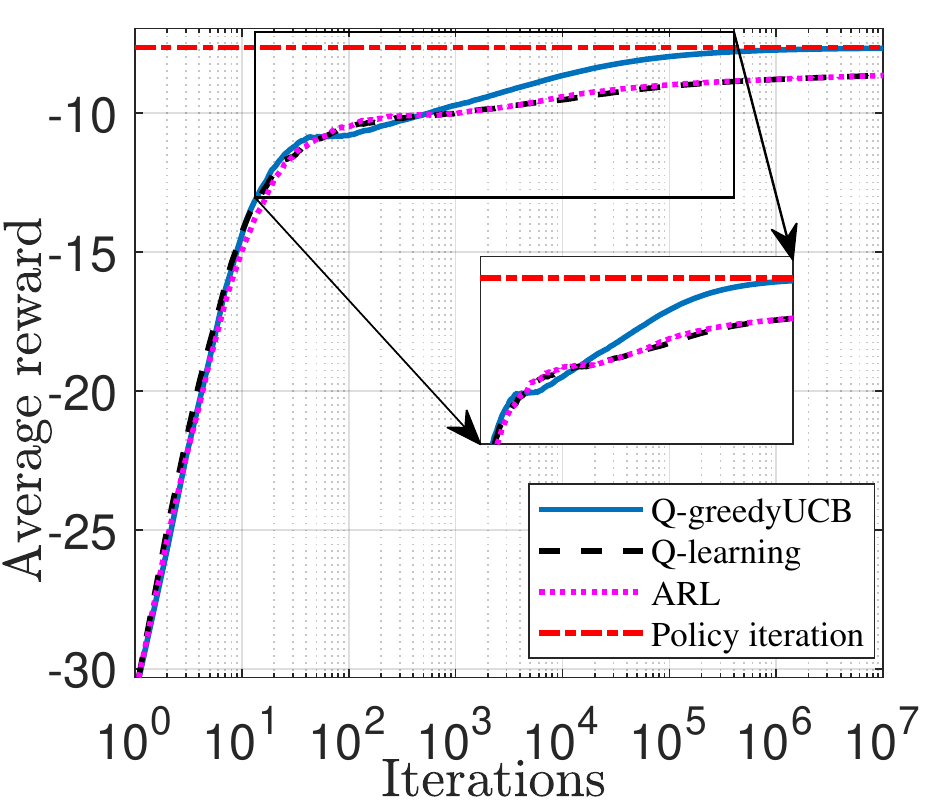}\label{fig:p2}
    }
    \caption{The average reward of Q-learning, ARL, and Q-greedyUCB over time slots and the optimal average reward from policy iteration (PI). ($\alpha = 0.4$, $\varepsilon = 0.01$, $\sigma = 1$, $\delta = 0.01$, and $\lambda = 1$).}
    \label{fig:average_cost}
\end{figure}


\begin{figure}[t!]
  \centering
  \vspace{-0.3cm}
   \subfigure[Q-greedyUCB.]{
         \includegraphics[width=7cm]{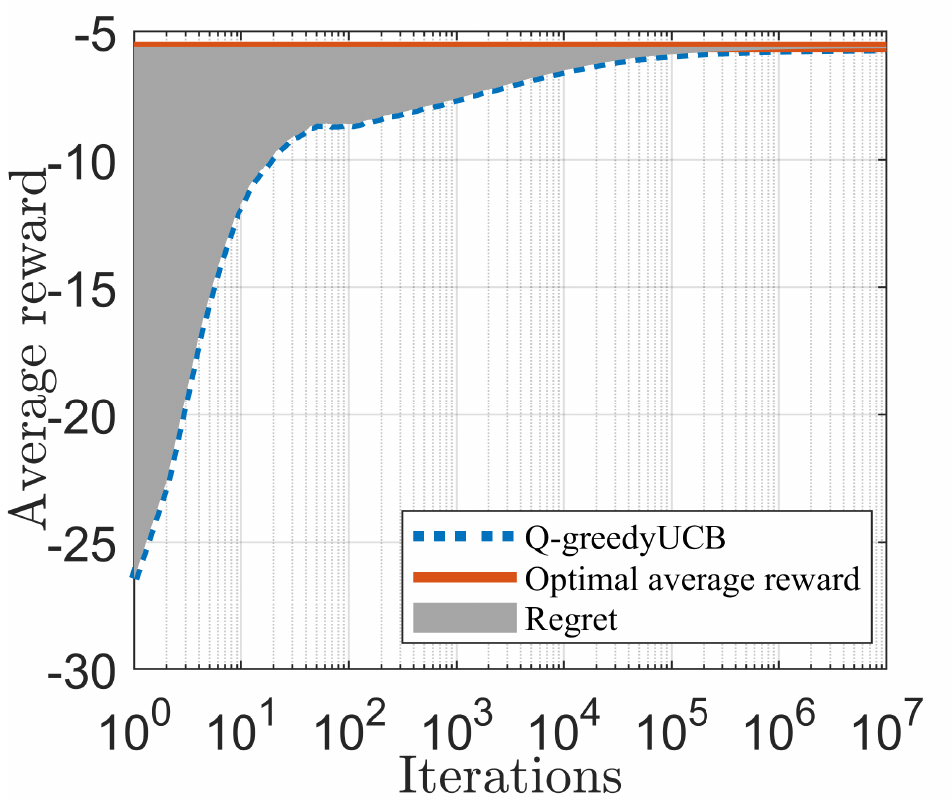} \label{fig:regret1}
    }
    \subfigure[Q-learning.]{
         \includegraphics[width=7cm]{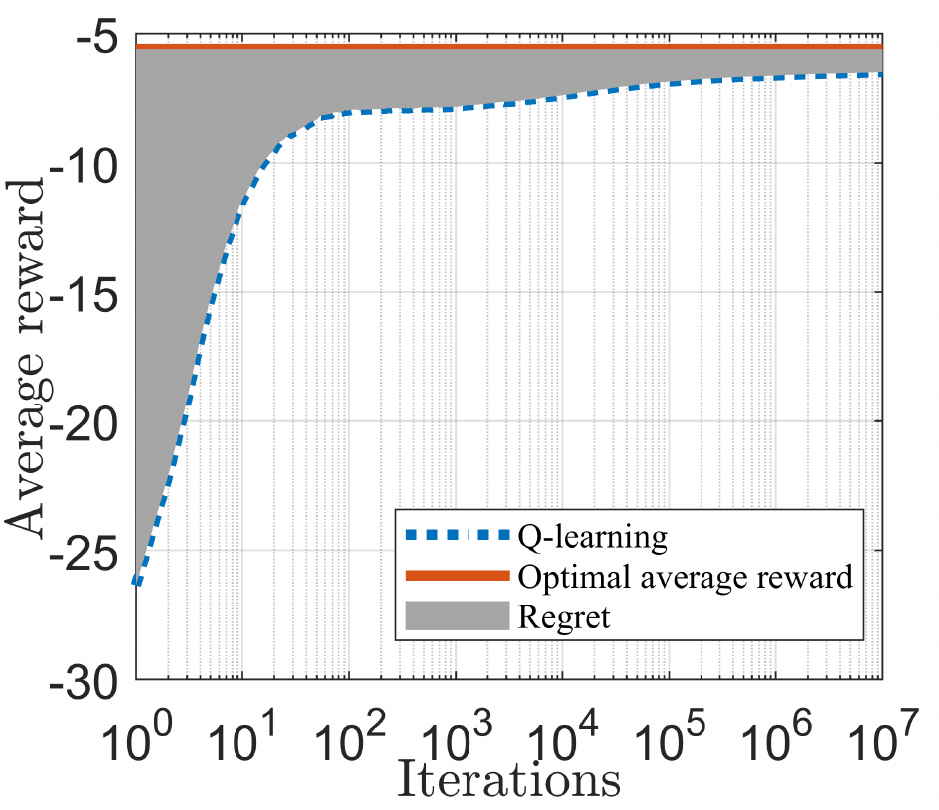}\label{fig:regret2}
    }
    \caption{The regret of Q-greedyUCB and Q-learning algorithm ($B = 10$, $M = 5$, $C = 4$, $\alpha = 0.4$, $\varepsilon = 0.01$, $\sigma = 1$, $\delta = 0.01$, and $\lambda = 1$).}
    \label{fig:regret}
\end{figure}

The optimal scheduling policy and the average reward curves generated by Q-greedyUCB are shown in Fig.~\ref{fig:arrival_date}, with $\alpha= 0.3, 0.4, 0.5, 0.6$ and $0.7$ (Other parameters remain unchanged), respectively. It is demonstrated that as $\alpha$ increases, the optimal action (service rate) and the optimal average reward value  become higher because the system workload becomes higher as the data arrival rate increases. Also, the optimal service rate becomes higher as the current queue length increases, since the latency increases when the queue stays occupied.

\begin{figure}[t!]
  \centering
   \subfigure[Optimal scheduling policy]{
         \includegraphics[width=7cm]{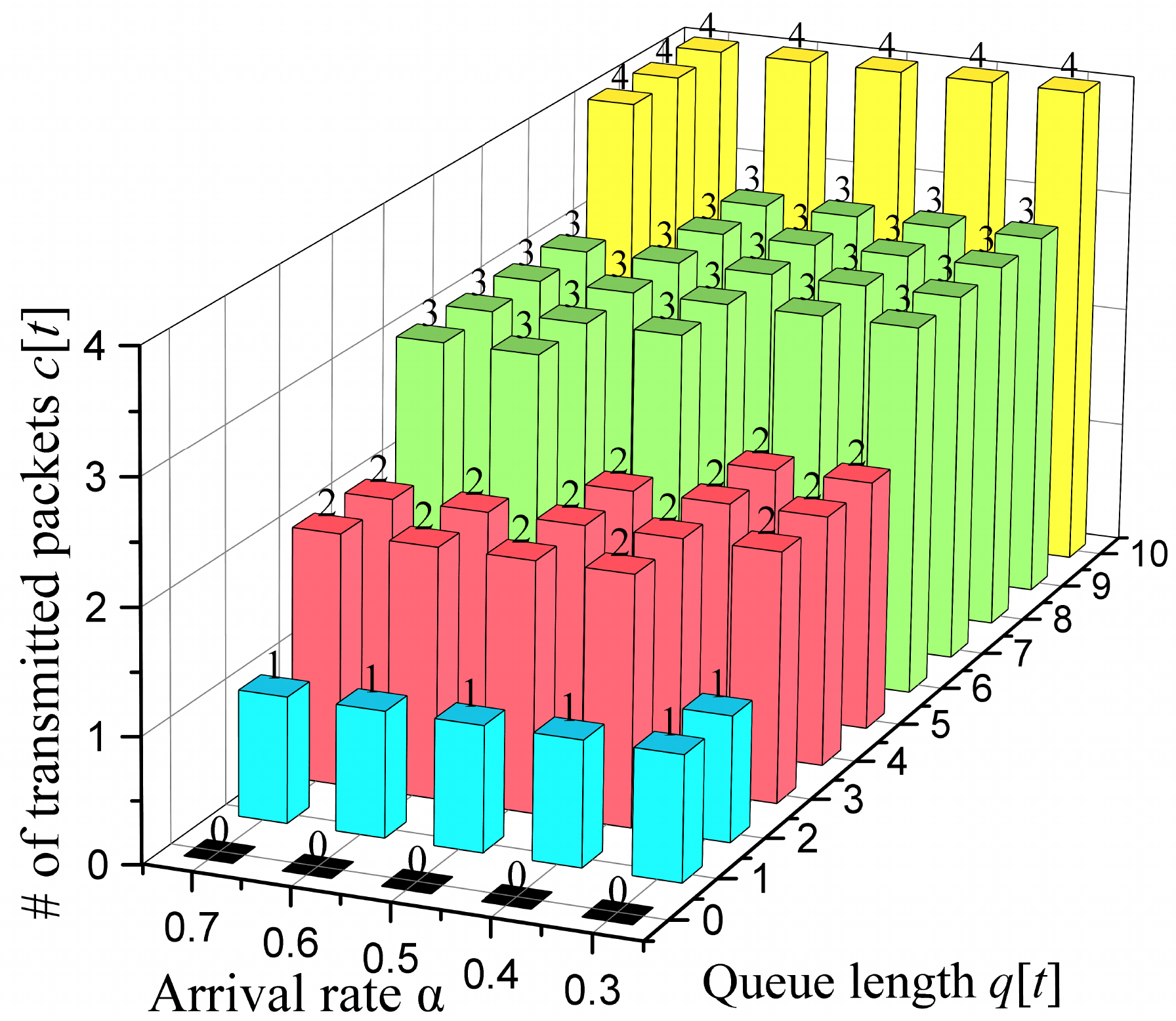} 
    }
    \subfigure[Average reward]{
         \includegraphics[width=7cm]{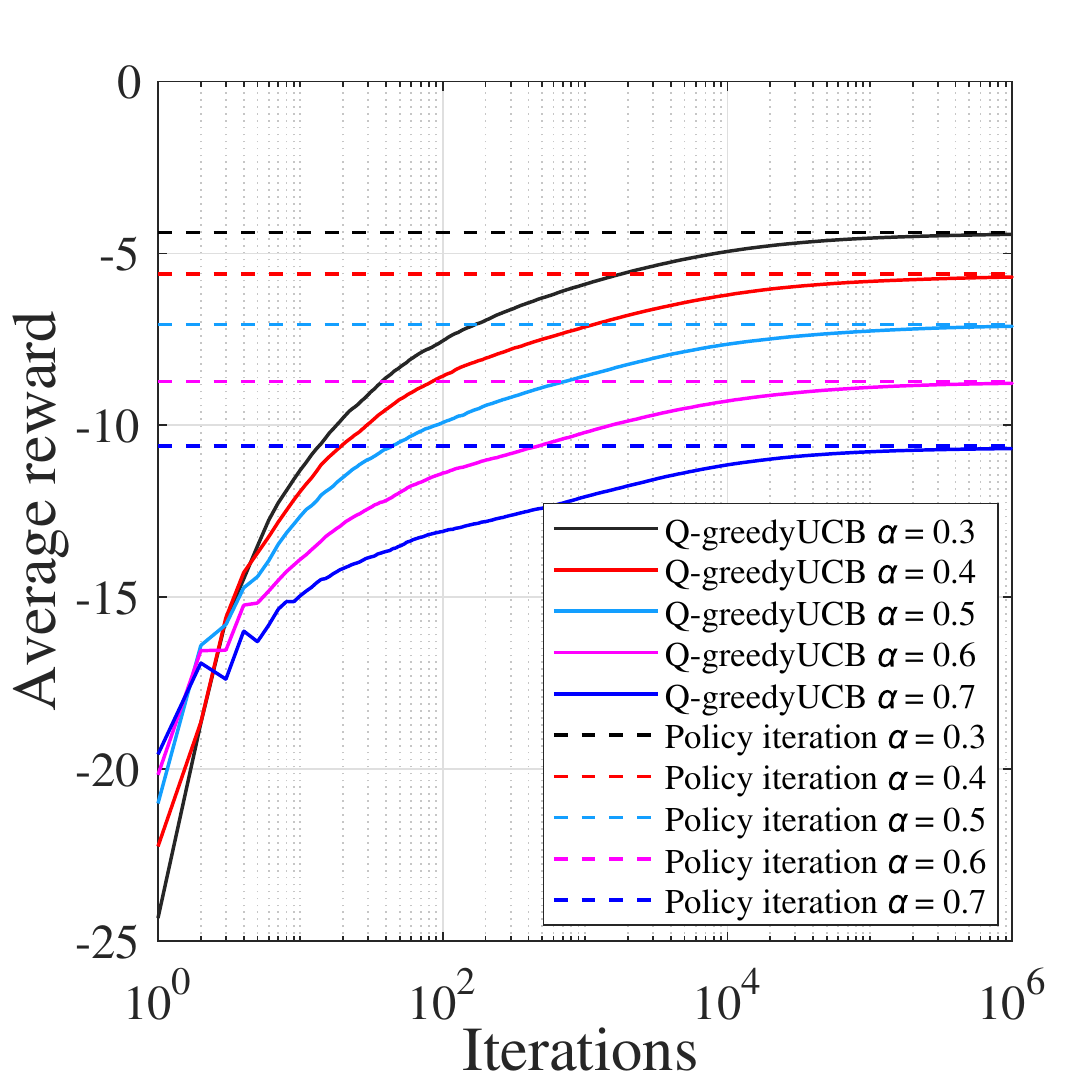}\label{fig:b1}
    }
    \caption{Deterministic optimal scheduling policies and the average reward of the Q-greedyUCB at different arrival rates ($B = 10$, $M = 5$, $C = 4$, $\sigma = 1$, $\varepsilon = 0.01$, $\delta = 0.01$, and $\lambda = 1$).}
    \label{fig:arrival_date}
\end{figure}

It should be noted that the average reward curve of Q-greedyUCB is only the average reward curve of the Q-greedyUCB algorithm during the learning process, and is not the average reward curve of the optimal policy. Additionally, the average reward at each time slot of Q-greedyUCB is the sum reward until the current time slot divided by the total length of time slots from the beginning.
Rewards incurred in the early stages do not matter since their contribution to the average reward will vanish as $T\to \infty$~\cite{bertsekas1996neuro}.

\section{Discussion and Conclusion}
\label{sec:disscussion_conclusion}
In this paper, To address the tradeoff between latency and energy consumption in communication systems, we modeled a single-queue single-server communication system. Also, in order to overcome the limitations of traditional methods (e.g., lack of flexibility) and the disadvantages of traditional RL algorithms (e.g. Q-learning), it has slow convergence speed and larger regret during training processes, we proposed a novel RL algorithm called Q-greedyUCB. In the proposed algorithm, we combined the Q-learning for \emph{average} reward algorithm with the UCB exploration strategy instead of the $\varepsilon$-greedy exploration policy adopted in the conventional Q-learning for \emph{average} reward algorithm. The simulation result shows that this method can guarantee little performance loss during the learning process compared with the conventional Q-learning algorithm and Algorithm 3 in~\cite{singh1994reinforcement}. The Lagrange multiplier method has been applied to solve the constrained optimization problem. To validate that the proposed algorithm can obtain the optimal scheduling policy, we compared it with the PI algorithm proposed in~\cite{chen2017delay}, the traditional Q-learning algorithm, and the Average-payoff RL algorithm proposed in~\cite{singh1994reinforcement}. We also mathematically prove the convergence of our Q-greedyUCB algorithm. 

Although there are important discoveries revealed by our studies, there are also limitations. The ARL, Q-learning and Q-greedyUCB algorithms are tabular solution methods. The tabular method is suitable for solving sequential decision problems with small enough state space and action space. If the dimension of the state and action space are large, this method is impractical because of the \emph{curse of dimensionality}~\cite{sutton2018reinforcement}. In the future work, we can try to use the function approximation reinforcement learning method to solve the delay-power trade-off problem in the communication system. 

RL can be used to address extremely complex problems that cannot be solved by conventional methods. In the future work, we will work on the analysis of the Q-greedyUCB in terms of the convergence rate. The impact of the step size $\gamma_t$ on convergence speed will also be studied. Furthermore, we will extend our RL method to more realistic system models (e.g., multiple channel states, general traffic arrival models, etc.).

\bibliographystyle{unsrt}  


\end{document}